\definecolor{mygreen}{RGB}{28,172,0} 
\definecolor{mylilas}{RGB}{170,55,241}
\definecolor{Gray}{gray}{0.9}
\let\oldnl\nl
\newcommand{\nonl}{\renewcommand{\nl}{\let\nl\oldnl}}
\begin{document}
\title{ MEGA-PT: A Meta-Game Framework for Agile Penetration Testing }
\author{Yunfei Ge\inst{1} \and
Quanyan Zhu\inst{1} \vskip -3mm}
\authorrunning{Y. Ge et al.}
\institute{New York University, New York NY 11201, USA \\
\email{\{yg2047,qz494\}@nyu.edu}}
\maketitle            
\begin{abstract}\vskip -8mm
Penetration testing is an essential means of proactive defense in the face of escalating cybersecurity incidents. Traditional manual penetration testing methods are time-consuming, resource-intensive, and prone to human errors. Current trends in automated penetration testing are also impractical, facing significant challenges such as the curse of dimensionality, scalability issues, and lack of adaptability to network changes. To address these issues, we propose MEGA-PT, a meta-game penetration testing framework, featuring micro tactic games for node-level local interactions and a macro strategy process for network-wide attack chains. The micro- and macro-level modeling enables distributed, adaptive, collaborative, and fast penetration testing.  MEGA-PT offers agile solutions for various security schemes, including optimal local penetration plans, purple teaming solutions, and risk assessment, providing fundamental principles to guide future automated penetration testing. Our experiments demonstrate the effectiveness and agility of our model by providing improved defense strategies and adaptability to changes at both local and network levels.

\keywords{Penetration Testing \and Cyber Security \and Meta-Game \and Cyber Risk Assessment \and  Agile Defense.}
\end{abstract}

\section{Introduction}

With the exponential growth of network technologies and the escalating frequency of security incidents, cybersecurity has become a global concern \cite{ge2023gazeta,zhao2021combating}. In response to these challenges, penetration testing has emerged as a crucial solution for uncovering system vulnerabilities and assessing network security through authorized ethical attacks \cite{hu2020automated}. However, traditional manual penetration testing performed by skilled IT professionals has several limitations. It can be time-consuming, resource-intensive, and prone to human error. Relying solely on manual testing often falls short of identifying all vulnerabilities within the system. Thus, there is a need for automation and the integration of advanced threat intelligence into the penetration testing process, enabling a more efficient and scalable approach to enhancing cybersecurity.

Current proposed automated penetration testing methods are increasingly becoming non-standard, complex, and resource-consuming, despite tool advancements. Reinforcement learning (RL) or Markov Decision Process (MDP) based methods \cite{ghanem2019reinforcement,hu2020automated} suffer from the curse of dimensionality, as they define the state space as the collection of all known information for each machine on the network. Partially Observable Markov Decision Process (POMDP) methods \cite{shmaryahu2017partially} face scalability issues, making it unfeasible to model and solve for large networks. Additionally, these methods lack adaptability to changes, as they assume the network structure and software configuration remain unchanged to learn the optimal policy. Many proposed models do not follow the Tactics Techniques and Procedures (TTPs) in real cybersecurity practice, relying mainly on hypotheses and simulations, which undermines their transition to praxis. Furthermore, merely identifying vulnerabilities through penetration testing is insufficient; it is crucial to provide defense suggestions and risk analysis based on the testing to enhance overall security.

To address the limitations of current penetration testing methods, we propose a meta-game-based automated penetration testing framework (MEGA-PT). In this framework, the micro tactic game captures the interactions between the defender and attacker at each local node, while the macro strategy process models lateral movement and the attack chain across the entire network. This approach offers several key features: practical implications, as the sequential interactions in each micro tactic game follow the MITRE ATT\&CK framework \cite{mitre2020mitigations} and use extensive-form games to model attack/defense dynamics; distributed penetration testing, with modularized processes at each micro tactic game allowing for parallel computation; and adaptability to changes at both the local and network levels, ensuring efficient testing and scalability.

Our proposed model enables various security schemes, depending on the solution concept selected for the meta-game. This extension of penetration testing goes beyond vulnerability discovery to include defense strategy recommendations and risk analysis. Specifically, the model provides solutions for the following security schemes: optimal local penetration plans under certain defense strategies, purple teaming solutions for enhanced defense suggestions, and risk assessments at equilibrium. 
Our contributions can be summarized as follows:
\begin{enumerate}
    \item We propose a meta-security game framework MEGA-PT for automated penetration testing, where micro tactic games at each local node are modeled as extensive-form games, and the macro strategy process is modeled as a Markov decision process.
    \item We offer applicable solution concepts for security schemes aimed at vulnerability discovery, defense suggestion, and risk analysis.
    \item Our experiments demonstrate the effectiveness of MEGA-PT by providing improved defense strategies and adaptability to changes at both local and network levels.
    \item In essence, MEGA-PT establishes fundamental principles to drive the future of automated penetration testing and its practices.
\end{enumerate}

\section{Problem Formulation}

Penetration testing is an ethical attack aimed at identifying system vulnerabilities, providing defense suggestions, and offering risk assessments. In this context, the term \textit{attacker} refers to the penetration testing agent, while the \textit{defender} represents the system security management engine. To describe attacker behaviors within a security program, Tactics, Techniques, and Procedures (TTPs) are commonly used. Figure~\ref{fig:ttps} illustrates the hierarchy between these terms. For security strategy analysis, we focus on Tactics and Techniques in penetration testing, omitting the detailed Procedures.

\begin{figure}[!t]
    \centering
    \includegraphics[width=0.9\linewidth]{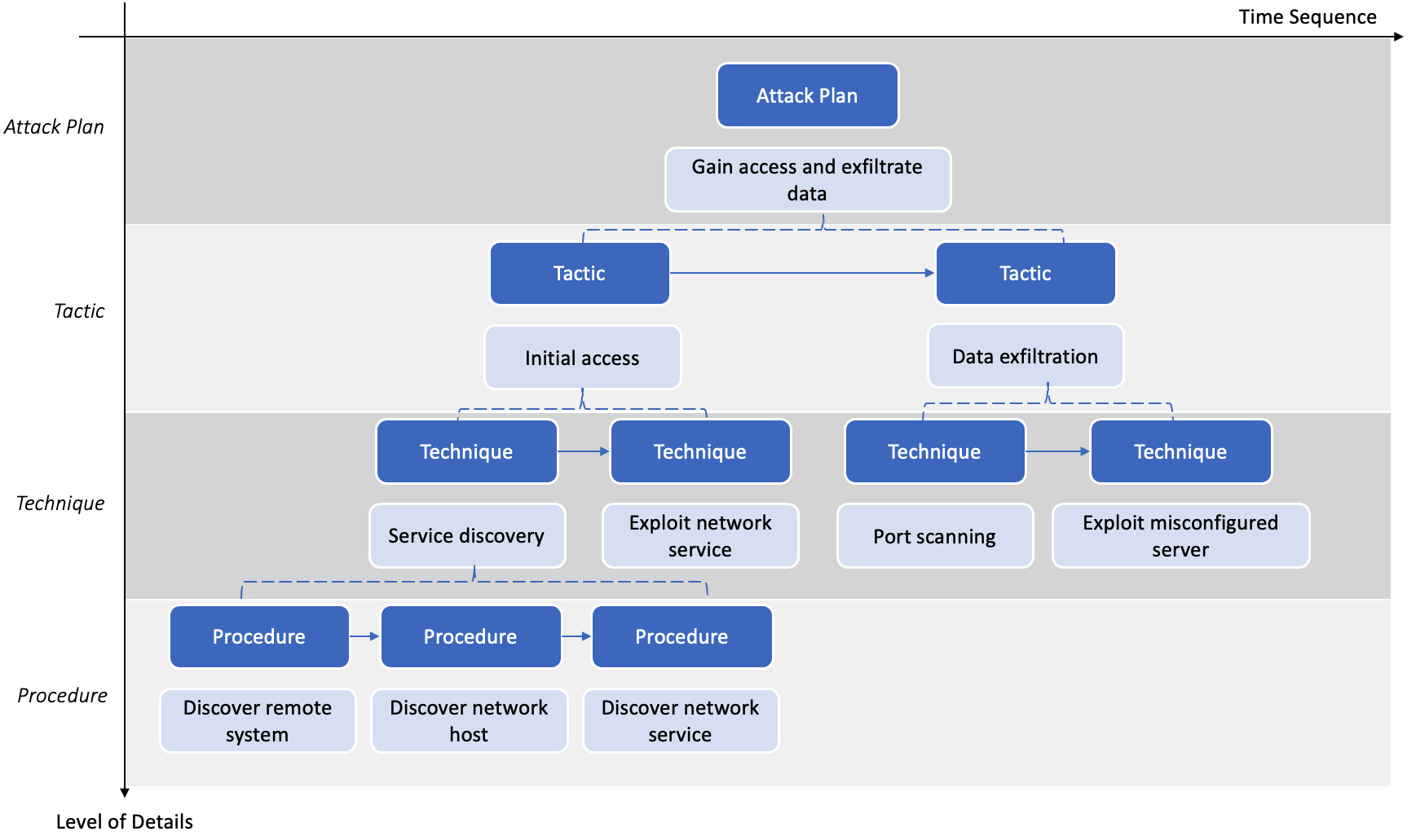}
    \caption{Attack plan and tactics, techniques, and procedures (TTPs). Depending on the level of detail, each attack plan can be elaborated by a sequence of tactics (from left to right), where each tactic is composed of a sequence of techniques, and each technique can be described by a sequence of procedures.}\vskip -5mm
    \label{fig:ttps}
\end{figure}

To describe interactions in penetration testing using TTPs, we propose a meta-security game over a network graph. The macro strategic game represents strategic attack activities between nodes, while the micro tactic game details tactic-level attack procedures on each local node. Let the directed graph \(G = \langle \mathcal{V}, \mathcal{E} \rangle\) represent the target network topology, where \(\mathcal{V}\) is a set of nodes (e.g., server, database, device), and \(\mathcal{E} \subseteq \mathcal{V} \times \mathcal{V}\) is a set of directed edges representing connections (e.g., SSH, RDP, cloud services) from node \(u\) to node \(v\). Self-loops are allowed as they indicate continued exploration of the same node. Let \(v^0 \in \mathcal{V}\) be the initial foothold in the system. Figure~\ref{fig:example} shows an example of the networked system topology. The penetration tester, as an ethical attacker, aims to explore available information, exploit discovered vulnerabilities, and influence critical assets inside the network.

\begin{figure}[!t]
\centering
\includegraphics[width=0.5\linewidth]{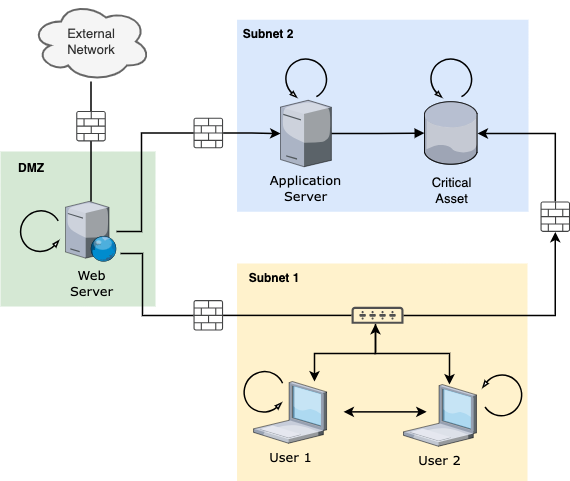}
\caption{Illustration of the networked system topology. The system contains $5$ nodes (web server, application server, $2$ user devices, and critical asset). The penetration testing starts from the web server, which is open to the external network.}\vskip -5mm
\label{fig:example}
\end{figure}

\subsection{Micro Tactic Games}

\subsubsection{Game-theoretic Modeling}\hfill

When an attacker gains access to one node, there are multiple steps involved before he completes the exploration and exploitation process on the node.
To model the sequential moves at each local node, we use the concept of an extensive-form game tree to explicitly and visually represent the sequential moves, possible outcomes, and information available at each decision point in a strategic interaction. Figure~\ref{fig:extensive} illustrates an example of a game tree at the web server. The attacker can choose to perform reconnaissance on hosts and services on the web server and then exploit the host to perform privilege escalation. The defender can choose to accept or deny the access request based on their defense policy. Depending on the privilege levels, the attacker could collect different credentials in the game, leading to various expected tactic outcomes and connecting to different nodes in the network.

\begin{definition}[Micro Tactic Game (MTG)]\label{def:mtg}
The Micro Game of the MEGA-PT is defined by a set of Micro Tactic Games (MTG) $\{\Gamma^v\}_{v\in\mathcal{V}}$ where $\mathcal{V}$ is the set of nodes in the system. Given a node $v\in\mathcal{V}$ in the network, the MTG on node $v$ can be represented by an extensive-form game tuple $\Gamma^v = \langle \mathcal{N}\cup \{c\}, \mathcal{H}^v,P, \{\mathcal{A}^v_i\}_{i\in \mathcal{N}\cup c}, \sigma^v_c,  \{u^v_i\}_{i\in \mathcal{N}}, \mathcal{Z}^v\rangle$, where each components represents:
\begin{itemize}
    \item \textbf{Players $\mathcal{N}=\{a,d\}$}  There are two main players in the game: the attacker ($a$) and the defender ($d$). Additionally, $c$ is the nature that represents the system randomness.
    \item \textbf{Histories $\mathcal{H}^v$} Each vertex in the game tree $h\in \mathcal{H}^v$ corresponds to a unique sequence of actions taken from the beginning of the game.
    \item \textbf{Turn Function $P: \mathcal{H}^v\mapsto \mathcal{N}\cup \{c\}$} The function $P(h)$ determines whose turn it is to make a move at each decision point for a given history vertex $h$. 
    \item \textbf{Techniques $\mathcal{A}^v_i$} $\mathcal{A}^v_i$ is a set of techniques that player $i$ can take. $A(h)$ denote the feasible techniques for player $i = P(h)$ at vertex $h\in\mathcal{H}^v$.
    \item \textbf{System Randomness $\sigma^v_c\in\Sigma^v_c$} Nature’s fixed policy $\sigma^v_c$ specifies the system randomness, which could be related to network traffic load, randomized system configuration, hardware failures, etc.
    \item \textbf{Tactic Expected Outcomes $\mathcal{Z}^v$} $\mathcal{Z}^v$ represents the finite set of possible outcomes for each attack sequence in the MTG. These outcomes correspond to the results observed at the leaf vertices of the game tree, which could be the credentials to user devices, authorized connection to the server, no vulnerability found, etc.
    \item \textbf{Utilities $u^v_i:\mathcal{Z}^v\mapsto \mathds{R}$}  The utility function $u^v_i$ determines the payoff or cost player $i$ receives when reaching a certain outcome.
\end{itemize}
\end{definition}

\begin{figure}
\centering
\includegraphics[width=0.98\linewidth]{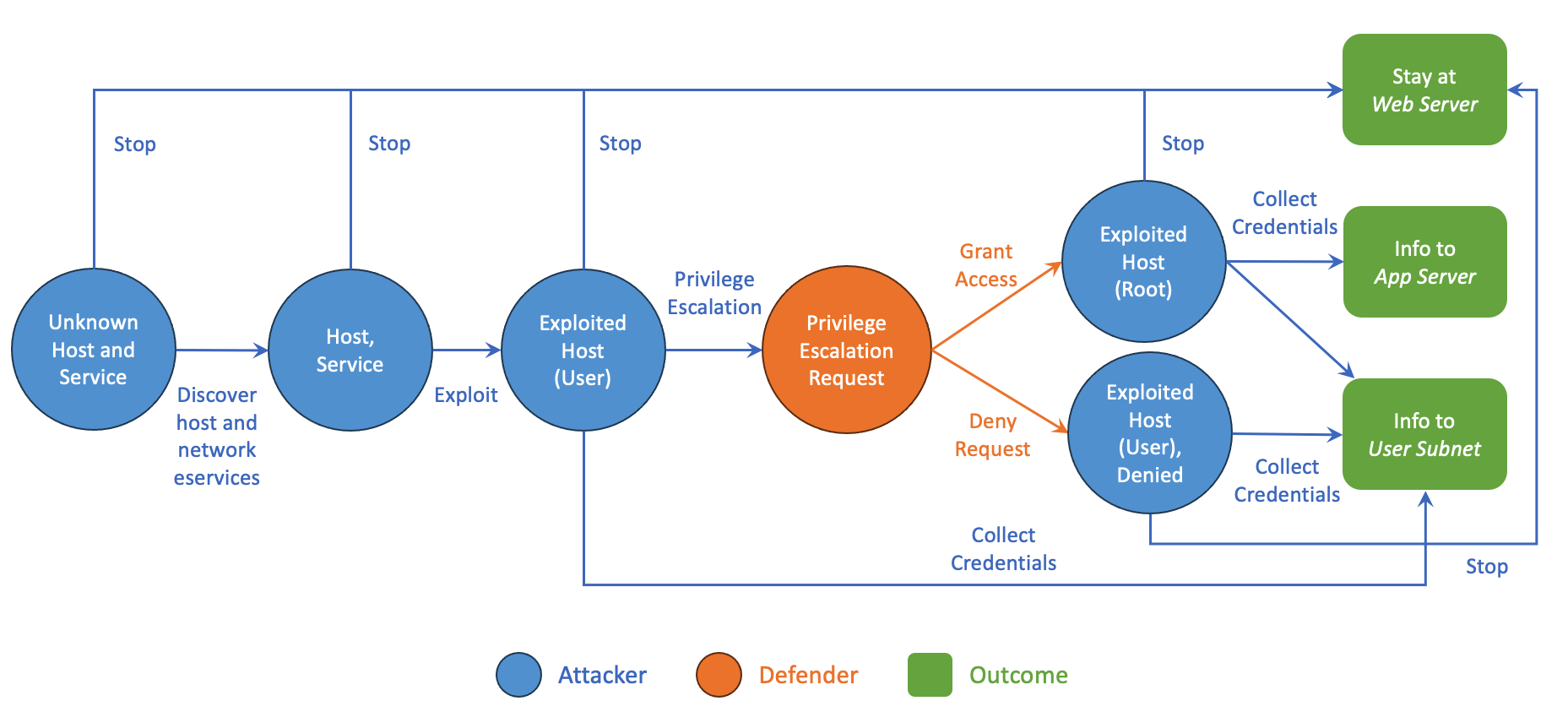}
\caption{Micro Tactic Game at the web server. The attacker needs to discover the host and services on the node, requesting privilege escalation to collect the credentials leading to other nodes. The defender could grant or deny the attacker's request depending on the defense strategy. The players' sequence of actions would lead to different expected tactic outcomes.}\vskip -5mm
\label{fig:extensive}
\end{figure}

In the context of TTPs, the attack tactic at the current node corresponds to a sequence of techniques, while the outcomes represent the high-level tactical goals. In this work, we assume that the tactical outcomes are either staying in the current node or leading to another node that can be connected from the current node. Thus, with a slight abuse of notation, we denote $\mathcal{Z}^v = \{u \mid u \in \mathcal{V}, (v,u) \in \mathcal{E}\}$.

\subsubsection{Penetration Plans}\hfill

Before we define tactics or strategies, we need to understand the basis on which players make their decisions. For any strategic player, decisions are made based on the current knowledge of the situation. However, it is sometimes challenging for the player to obtain the complete interaction history due to partial observations. 
Consequently, there are decision vertices in the game tree that the player cannot distinguish between. In an extensive-form game, this is called an \textit{information set}. In this work, we refer to this information set as a \textit{knowledge set}.
 
\begin{definition}[Knowledge Set]
Given the MTG at node $v\in\mathcal{V}$, a knowledge set $I_i\subseteq \mathcal{H}^v$ of a player $i$ represents a set of decision vertices where the player $i$ has the same available techniques and cannot distinguish between the vertices.
\end{definition}

The concept of knowledge set helps us better describe the decision-making process for both players. Given the MTG at the current node, attackers can construct their tactics in different ways. One approach is to have a sequential plan of techniques from the beginning to the end of the game. This step-by-step pure penetration or defense plan assigns a single technique to each possible knowledge set. Players can also randomize over single-technique plans at each knowledge set, known as a mixed penetration or defense plan. 

\begin{definition}[Pure Penetration (Defense) Plan]
Consider the MTG defined in Definition~\ref{def:mtg}. 
Given the MTG $\Gamma^v$, a pure penetration or defense plan at node $v\in\mathcal{V}$ for player $i\in \mathcal{N}$  is a mapping $q^v_i: \mathcal{I}_i \mapsto \mathcal{A}^v_i$ that assigns a technique $q^v_i(I_i)\in A(I_i)$ for every knowledge set $I_i \in \mathcal{I}_i$. Denote $Q^v_i$ as the set of all possible pure penetration or defense plans for player $i \in \mathcal{N}$ at this micro game.
The pure penetration or defense plan for the entire system is defined as the set $\{q^v_i\}_{v\in\mathcal{V}}$, with $i=a$ means the attacker and $i=d$ means the defender.
\end{definition}

\begin{definition}[Mixed Penetration (Defense) Plan]
Consider the MTG defined in Definition~\ref{def:mtg}. 
Given the MTG $\Gamma^v$, a mixed penetration or defense plan at node $v\in\mathcal{V}$ for player $i\in \mathcal{N}$ is a probability distribution over all of player $i$'s pure penetration plans, i.e., $\sigma^v_i \in \Delta(Q^v_i)$. Denote $\Sigma^v_i$ as the set of all possible mixed penetration or defense plans for player $i \in \mathcal{N}$ at this micro game.
The mixed penetration or defense plan for the entire system is defined as the set $\{\sigma^v_i\}_{v\in\mathcal{V}}$, with $i=a$ means the attacker and $i=d$ means the defender.
\end{definition}

The other approach is to focus on each knowledge set instead of defining the step-by-step actions of the entire game. 
At each knowledge set, a probabilistic distribution is assigned over the feasible techniques. This corresponds to the behavioral strategy in extensive-form games. Instead of planning everything ahead, this policy focuses on the decisions in each knowledge set. In this work, we call it an operational search plan. Denote by $\mathcal{I}_i$ the collection of knowledge sets of player $i \in \mathcal{N}$. By definition, for every knowledge set $I_i \in \mathcal{I}_i$, let $A(I_i)$ be the set of possible actions at $I_i$. Formally, the definition is given as follows.

\begin{definition}[Operational Search Plan]
Given the MTG at node $v\in\mathcal{V}$, an operational search plan for player $i \in \mathcal{N}$ is a function mapping each of his knowledge set to a probability distribution over the set of possible techniques at that knowledge set, given by:
\begin{equation}
b^v_i:\mathcal{I}_i \mapsto \bigcup_{I_i\in\mathcal{I}_i} \Delta(A(I_i)),
\end{equation}
such that $b^v_i(I_i)\in \Delta(A(I_i))$ for all $I_i\in\mathcal{I}_i$. Denote $B^v_i$ as the all admissible set of operational search policies of player $i\in\mathcal{N}$ at this MTG.
\end{definition}


In this work, we assume that all the players have \textbf{perfect recall}; i.e., the player remembers every piece of information that he knows from the past, including his moves, the other player’s moves, or chance moves. Under this assumption, we can always find the equivalence between the operational search plan is equivalent to the mixed penetration plans at the MTG of each node.

\begin{theorem}[Planning Equivalence]\label{theorem:equal}
    In every MTG in extensive form, if player $i\in\mathcal{N}$ has perfect recall, then for every mixed penetration plan there exists an equivalent operational search plan, and vice versa.
\end{theorem}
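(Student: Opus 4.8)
The plan is to prove the classical equivalence (this is Kuhn's theorem specialized to the MTG) by exhibiting explicit maps in both directions and checking that each preserves the \emph{realization probabilities} over terminal vertices. Fix a node $v$ and a player $i \in \mathcal{N}$, and hold the plan of the other player together with nature's fixed policy $\sigma^v_c$ arbitrary but fixed. Call a mixed penetration plan $\sigma^v_i \in \Sigma^v_i$ and an operational search plan $b^v_i \in B^v_i$ \emph{equivalent} if, for every such fixed profile of the remaining participants, the induced distribution over the tactic outcomes $\mathcal{Z}^v$ (equivalently, over the leaf vertices of $\mathcal{H}^v$) is the same. It suffices to match, for every vertex $h \in \mathcal{H}^v$, the \emph{own realization weight} $\rho_i(h)$, i.e., the product of $i$'s own action probabilities along the path from the root to $h$; the contributions of the other player and of nature along that path then enter as a common multiplicative factor and cancel from the comparison.

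The first and substantive step is a structural lemma extracted from perfect recall: along the path from the root to any vertex $h$, player $i$'s knowledge sets are encountered in a well-defined order $I_i^1, \dots, I_i^k$, each at most once, and the action $i$ took at $I_i^\ell$ is common to all vertices of $I_i^\ell$ lying on a path to a shared continuation. Hence the sequence of (knowledge set, action) pairs experienced by $i$ up to $h$ is a function of $h$ alone, and a pure plan $q^v_i$ is consistent with reaching $h$ (against some opponent/nature play) if and only if $q^v_i(I_i^\ell)$ equals the on-path action for each $\ell = 1, \dots, k$. This identifies the set of pure plans reaching $h$ with a product ``cylinder'' in $Q^v_i$ pinned down by finitely many coordinates.

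With this in hand, both constructions are bookkeeping with finite sums and products. For mixed $\Rightarrow$ behavioral: given $\sigma^v_i$, for each knowledge set $I_i$ let $R(I_i) \subseteq Q^v_i$ be the set of pure plans under which $I_i$ is reachable — by the lemma this event depends only on $i$'s own choices at predecessor knowledge sets — and, when $\sigma^v_i(R(I_i)) > 0$, set
\[
b^v_i(I_i)(a) \;=\; \frac{\sigma^v_i\big(\{q^v_i \in R(I_i): q^v_i(I_i) = a\}\big)}{\sigma^v_i\big(R(I_i)\big)}, \qquad a \in A(I_i),
\]
and $b^v_i(I_i)$ arbitrary otherwise; telescoping these conditional probabilities along $I_i^1, \dots, I_i^k$ yields exactly the $\sigma^v_i$-mass of the cylinder reaching $h$, i.e.\ $\rho_i(h)$. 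For behavioral $\Rightarrow$ mixed: given $b^v_i$, draw the action at each knowledge set independently,
\[
\sigma^v_i(q^v_i) \;=\; \prod_{I_i \in \mathcal{I}_i} b^v_i(I_i)\big(q^v_i(I_i)\big),
\]
a finite product since $\mathcal{H}^v$ is finite; summing $\sigma^v_i$ over the cylinder $\{q^v_i : q^v_i(I_i^\ell) = a_\ell,\ \ell \le k\}$ factors into the on-path probabilities times a sum over the remaining coordinates that collapses to $1$, again giving $\rho_i(h)$.

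The main obstacle is precisely the structural lemma, and it is the only place perfect recall is used: without it, $h$ need not determine $i$'s own past action sequence, the reaching-set of pure plans fails to be a product cylinder, the conditional probabilities in the first construction need not telescope, and the independent-product mixed plan in the second cannot reproduce the correlations that a general mixed plan may impose across indistinguishable situations. So the real work is a careful induction on the depth of the game tree establishing the at-most-once visitation and path-determinacy of knowledge sets under perfect recall; once that is secured, the two verifications above are routine. I would also note the minor point that the behavioral choice on zero-probability knowledge sets (and, symmetrically, the choice of representatives there in the reverse map) is immaterial, since such sets contribute $0$ to every $\rho_i(h)$.
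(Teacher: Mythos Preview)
Your proposal is correct: you have written out the standard proof of Kuhn's theorem, with the explicit conditional-probability construction in one direction, the product measure in the other, and the realization-weight matching as the equivalence criterion. The paper's own proof consists of a single sentence deferring to Kuhn's theorem \cite{aumann1961mixed,kuhn1953extensive}, so your sketch is not a different route but rather the very argument the paper is citing, spelled out in detail.
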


\begin{proof}
    For interested readers, the proof of the theorem follows Kuhn's theorem \cite{aumann1961mixed,kuhn1953extensive} in extensive-form games.
\end{proof}

Theorem~\ref{theorem:equal} indicates the equivalence between the mixed penetration plan and the operational search policy. The mixed penetration plan can be reduced to an operational search policy, and conversely, the operational search policy can generate a mixed penetration plan. The mixed penetration plan provides a holistic offline view, assigning a probability to each possible sequence of interactions. In contrast, the operational search policy describes the online decision-making process of the players. The equivalence allows us to choose the appropriate plan for the corresponding security purpose. The theorem, on the one hand, indicates that we can synthesize an operational strategy once we compute or are given a penetration plan. On the other hand, it suggests that a penetration plan or the course of actions of an attacker can be constructed after the penetration testing by obtaining the attacker’s strategy at each decision point.

\subsection{Security Schemes and Solution Concepts}\hfill

Depending on the security goals, our framework is able to describe different security schemes and provide corresponding solution concepts.

\subsubsection{Optimal Local Penetration Plan}\hfill

The primary goal of penetration testing is to identify vulnerabilities in the target system. This includes not only surface-level vulnerabilities that can be detected through vulnerability scanning but also deeper vulnerabilities that can only be discovered through a sequence of attack actions. Penetration testing provides a thorough examination of the system, and this type of security scheme is commonly known as red teaming. Red teaming involves simulating a malicious attacker to assess the effectiveness of the current defense policy. In this context, the defender's strategy remains fixed, while the attacker responds optimally to the defense policy. Red teaming aims to determine the optimal local penetration plan $\sigma^{v,red}_a$ that maximizes the attacker's utility, given the defender's strategy and the inherent system randomness in the system. The solution concept for the optimal local penetration plan is defined as follows:

\begin{definition}[Optimal Local Penetration Plan]\label{def:olap}
    For the MTG at node $v\in\mathcal{V}$, given the defense strategy $\sigma^v_d$ and system randomness $\sigma^v_c$, the optimal local penetration plan is a probability distribution over all attacker's pure penetration plans, i.e., $\sigma^v_a \in \Delta(Q^v_i)$ given by
\begin{align}
\sigma^{v,red}_a(\sigma^v_d,\sigma^v_c) \in \arg\max_{\sigma^v_a\in\Sigma^v_a}u^v_a(\sigma^v_a, \sigma^v_d,\sigma^v_c),
\end{align}
where $u^v_a(\sigma^v_a, \sigma^v_d,\sigma^v_c)$ is the expected utility of outcome generated following the plan profile $\Phi^v = (\sigma^v_a, \sigma^v_d,\sigma^v_c)$.
\end{definition}

The optimal local penetration plan is an ex-ante strategy where the attacker or penetration tester has complete information about the local node, including defense strategy, possible vulnerabilities, and system randomness. With this information, an optimal pure or mixed penetration plan can be obtained to visualize different attack chains along with their outcomes and probabilities. However, in practice, the attacker or penetration tester may not have complete information. Their penetration plan is developed through learning (e.g., via machine learning or reinforcement learning) without complete prior knowledge about the local node. The practically used penetration plan belongs to the family of operational search plans, which is a mapping from the knowledge set to the probability distribution over the set of possible techniques. 

\begin{remark}[Optimal v.s. Practical]
The practically used penetration plan is equivalent to the optimal local penetration plan when the penetration tester's learning results are perfect. This is possible when the penetration testing agent or attacker, through its learning process, has identified the exact set of actions to take in each state to maximize the expected reward, as if it had known the full model from the beginning. Under these conditions, the practically used penetration plan is identical to the optimal operational search plan. According to Theorem~\ref{theorem:equal}, this is thus equivalent to the optimal local attack policy in Def.~\ref{def:olap}. 
\end{remark}

The optimal local penetration plan generates useful byproducts that help describe the penetration plan. One such byproduct is the \textbf{\textit{course of action}}, which describes the realized sequence of attack techniques derived from the penetration plan.
Another important concept is the \textbf{\textit{tactic outcome probability}}, which represents the total probability of reaching any outcome of the game $z\in\mathcal{Z}^v$. Let $H^z\subset\mathcal{H}$ be the set of leaf vertices with outcome $z\in\mathcal{Z}^v$. Define $L(h^z)=\{(h_1,a_1),(h_2,a_2),\dots\}$ as the sequences of vertices and actions leading to the leaf vertex $h^z\in H^z$. The tactic outcome probability is defined as follows.

\begin{definition}[Tactic Outcome Probability]\label{def:outpr}
For the MTG at node $v\in\mathcal{V}$, given the nature's fixed policy (if any) and the plan profile of the attacker and the defender, i.e., $\Phi^v = (\sigma^v_a,\sigma^v_d,\sigma^v_c)$, we define $\tau^v: \mathcal{Z}^v \mapsto [0,1]$ as the tactic outcome probability. We use $\tau^v(z)$ to denote the probability of reaching outcome $z\in\mathcal{Z}^v$ as
\begin{align}
\tau^v(z\mid \Phi^v) =
= \sum_{h^z\in\mathcal{H}^z} \left[ \prod_{(h_j,a_j)\in L(h^z)} \sigma^v_i (a_j)\mathbbm{1}_{\{P(h_j)=i\}}\right],
\label{eq:tau}
\end{align}
where $P(h_j)$ is the turn function and $\sigma^v_i (a_j)$ is the probability that action $a_j$ is chosen by player $i=P(h_j)$.
\end{definition}

\subsubsection{Purple Teaming Defense Plan}\hfill

Purple teaming is a collaborative cybersecurity assessment that combines attack and defense strategies to enhance the overall security posture of a system. While red teaming penetration testing predicts the attacker's behavior, purple teaming focuses on improving the defense policy to mitigate potential attacks. This approach corresponds to a Stackelberg game or leader-follower model, where the defender, as the leader, enforces their strategy on the attacker, as the follower. The defender must anticipate the attacker's responses to the defense strategy and optimize the defense policy accordingly, resulting in a bi-level optimization problem. Penetration testing provides credible predictions of the attacker's penetration plan, enabling proactive defense with purple teaming. The solution concept for the purple teaming defense plan is defined as follows:

\begin{definition}[Optimal Purple Teaming Defense Plan]
    For the MTG at node $v\in\mathcal{V}$, given the system randomness $\sigma_c^v\in$, the optimal purple teaming defense plan includes two parts: $\sigma^{v,pur}_d\in\Sigma_d^v$ is the optimal purple teaming defense plan, which is a probability distribution over all defender's pure defense plans, i.e.,$\sigma_d^{v,pur}\in\Delta(Q^v_d)$; $\sigma^{v,*}_a\in\Sigma_a^v$ is the anticipated optimal local penetration plan for the attacker given the defense plan. 
    \begin{align}
    \sigma^{v,pur}_d(\sigma^v_c) \in & \max_{\sigma^v_d \in \Sigma^v_d}\quad  u^v_d(\sigma^{v,*}_a, \sigma^v_d,\sigma^v_c) \\
    \text{s.t.}\quad & \sigma^{v,*}_a \in \arg\max_{\sigma^v_a\in\Sigma^v_a}u^v_a(\sigma^v_a, \sigma^v_d,\sigma^v_c).
    \label{eq:br}
\end{align}
\end{definition}

The inner optimization problem aligns with the optimal local penetration plan as defined in Def.~\ref{def:olap}, aiming to predict the worst-case attacker behavior under the current defense strategy. Penetration testing, utilizing learning techniques, determines the attacker's anticipated response to a given defense strategy. To implement purple teaming defense in practical settings, organizations undergo an iterative process where the defender tests a defense strategy, observes the worst-case attack, and then adjusts the defense to achieve better utility.

\subsubsection{Risk Assessment at Equilibrium}\hfill

Another important venue penetration testing contributes to is the risk assessment of the system. Instead of focusing on individual attack events, a risk assessment would take into account the average or the steady state of the long-term behaviors of the attacker and defender in the long run. The concept of equilibrium in game theory offers a natural way to analyze these steady-state strategic interactions within the system. A Nash Equilibrium (NE) in the MTG provides a solution where no player has an incentive to deviate from their strategy. Formally, the solution concept for risk assessment is defined as follows:

\begin{definition}[Nash Equilibrium-Informed Risk Assessment]
For the micro tactic game at node $v\in\mathcal{V}$, given the system randomness $\sigma_c^v\in\Sigma^v_c$, the Nash equilibrium-informed risk assessment is a plan profile $(\sigma_a^{v,*}, \sigma_d^{v,*})$, where $\sigma_a^{v,*}$ is the equilibrium penetration plan for the attacker and $\sigma_d^{v,*}$ is the equilibrium defense plan for the defender. The Nash equilibrium plans satisfy
\begin{align}\label{NE}
    u^v_i(\sigma_i^{v,*}, \sigma_{-i}^{v,*})\geq u^v_i(\sigma_i, \sigma_{-i}^{v,*}),
\end{align} 
for all admissible $\sigma^v_i\in \Sigma^v_i$ and for all $i\in \mathcal{N}$.
\end{definition} 

To practically solve the game, we consider a refinement of Nash equilibrium in sequential games: Subgame Perfect Nash Equilibrium (SPNE). In addition to satisfying the conditions of Nash equilibrium, SPNE requires that strategies remain in equilibrium at every possible subgame of the overall game. It can be solved using backward induction as the game-theory version of the dynamic programming principles.


\begin{theorem}\label{theorem:spne}
    For every finite micro tactic game at node $v\in\mathcal{V}$ with fixed system randomness $\sigma_v^c\in\Sigma_c^v$, the game with perfect recall has a subgame perfect Nash equilibrium in mixed or operational search penetration/defense plans. The game with perfect information has a subgame perfect Nash equilibrium in pure penetration and defense plans \cite{maschler2020game}.
\end{theorem}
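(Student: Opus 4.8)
The plan is to prove both assertions by backward induction, with nature treated as a non-strategic participant whose moves are pinned down by the fixed policy $\sigma^v_c$; this way every expected utility $u^v_i(\cdot)$ is an expectation over nature's randomization and stays multilinear in the two strategic players' plans. Finiteness of $\Gamma^v$ (finitely many histories in $\mathcal{H}^v$, finite technique sets $\mathcal{A}^v_i$, finite outcome set $\mathcal{Z}^v$) guarantees both that the induction terminates and that the mixed-plan simplices are compact, which is exactly what Nash's theorem needs.

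For the perfect-information case, every knowledge set is a singleton, so a pure plan just assigns one technique to each decision vertex. I would order the vertices of $\mathcal{H}^v$ by depth and run Zermelo--Kuhn backward induction: leaves carry payoff vectors $(u^v_a(z),u^v_d(z))$ for their outcome $z\in\mathcal{Z}^v$; at a nature vertex one averages the children's values under $\sigma^v_c$; at a decision vertex $h$ with $P(h)=i$, once every child subtree has a value, player $i$ picks (under any fixed tie-breaking rule) a technique in $A(h)$ maximizing $u^v_i$ of the induced continuation. The output is a pure profile $(q^{v,*}_a,q^{v,*}_d)$, and by construction its restriction to any subtree is, at that subtree's root, a best response against the inductively optimal continuation, hence a Nash equilibrium of that subgame --- which is precisely subgame perfection.

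For the general perfect-recall case the difficulty is that imperfect information can prevent optimizing vertex by vertex, so I would instead induct over the subgame structure. A subgame of $\Gamma^v$ is the subtree rooted at a vertex $h$ for which $\{h\}$ is a knowledge set and no knowledge set of either player straddles the inside and outside of that subtree; since $\mathcal{H}^v$ is finite there are only finitely many subgames, partially ordered by subtree inclusion, with $\Gamma^v$ itself the maximal one. I would take a minimal subgame $\Gamma'$: it is a finite extensive-form game with perfect recall, two strategic players and fixed nature, so its mixed-plan spaces $\Sigma^v_i$ (equivalently, by perfect recall together with Theorem~\ref{theorem:equal}, its operational search plan spaces $B^v_i$) are nonempty, compact and convex, and its expected utilities are multilinear; Nash's theorem then furnishes a Nash equilibrium in mixed or operational search plans. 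Freezing that equilibrium, collapsing $\Gamma'$ to a terminal node carrying its equilibrium payoff vector, and recursing on the strictly smaller family of subgames --- which terminates because the finite inclusion order is well-founded --- produces a plan profile that restricts to a Nash equilibrium on every subgame, i.e.\ an SPNE of $\Gamma^v$; the operational search representation comes from applying Theorem~\ref{theorem:equal} node by node.

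The main obstacle I expect is the bookkeeping around the subgame decomposition: verifying that subgames are well defined and finite in number, and that after splicing equilibrium continuation payoffs into terminal nodes the reduced object is still a legitimate finite perfect-recall extensive-form game, so that Nash's existence theorem genuinely applies to each minimal subgame. Once that is set up, the remaining ingredients are classical --- the Nash/Kakutani fixed-point argument for equilibrium existence, and the elementary observation that under perfect information backward induction never needs to randomize, which yields the pure-plan refinement (cf.\ \cite{maschler2020game}).
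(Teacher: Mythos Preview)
Your proposal is correct and follows the standard textbook argument. The paper itself does not supply a proof: the theorem is stated with a citation to \cite{maschler2020game} and the exposition moves directly on, treating both assertions as classical results. Your backward-induction scheme --- Zermelo--Kuhn on the depth-ordered vertices for the perfect-information case, and induction over the finite inclusion-ordered family of subgames together with Nash's existence theorem for the general perfect-recall case --- is exactly the proof one finds in that reference, so there is no substantive divergence to compare.
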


Theorem~\ref{theorem:spne} states that we can always find the risk assessment equilibrium in mixed plans or operational search policies, even with imperfect information. Mixed penetration plans provide the probability of the entire attack/defense action sequence occurring, offering a holistic view for analysis purposes. On the other hand, operational search policies focus on what happens in each knowledge set, providing a fine-grained strategy. The equivalence between them allows us to zoom in or out as needed, facilitating flexible and comprehensive analysis.


\subsection{Macro Strategic Process}

One key component in the MTG is the utility function for each outcome, $u_i^v(z)$, for all $z \in \mathcal{Z}^v$ and $i\in\mathcal{N}$. Utilities represent the payoff or cost of staying or moving to the next node and must be evaluated globally, considering neighboring nodes and their connections. After local exploration and exploitation, the attacker can use obtained credentials or discovered vulnerabilities to move to different nodes, a process known as \textit{lateral movement}. The attacker's movement and the creation of the attack kill chain depend on the network topology and the expected utilities of each node. We model this decision-making process across the network using an MDP, referred to as a Macro Strategic Process (MSP). 

\subsubsection{MSP Modeling}\hfill
\begin{definition}[Macro Strategic Process (MSP)]
\label{def:msp}
The Macro Strategic Process of the MEGA-PT is defined by an MDP $\Lambda^g$. Given the target networked system $\mathcal{G}=\langle \mathcal{V}, \mathcal{E}\rangle$, the MSP for the attacker can be represented by a tuple $\Lambda^g = \langle  \mathcal{S},\mathcal{A}^g, T, R, \gamma \rangle$, where each component represents:
\begin{itemize}
    \item \textbf{Network Nodes $\mathcal{S} = \mathcal{V}$:}  The nodes in the network form the state space. Each node or state $v\in\mathcal{V}$ processes an MTG $\Gamma^v$ as defined in Def.~\ref{def:mtg}.
    \item \textbf{Connections $\mathcal{A}^g = \mathcal{E}$:} The connections between the nodes are the attacker's action space, which is equivalent to all the directed edges in the network.
    \item \textbf{Transition Success Probability $T:\mathcal{S}\times\mathcal{A}^g \mapsto \Delta(\mathcal{S})$:} This function describes the success rate of the lateral movement attempt between the nodes. 
    \item \textbf{Movement Rewards $R:\mathcal{S}\times \mathcal{A}^g\times \mathcal{S} \mapsto \mathbb{R}$:} The immediate reward or penalty for the attacker when trying to laterally move along the edge to the other node. 
    \item \textbf{Discounting Factor $\gamma\in (0,1]$.}
\end{itemize}
\end{definition}

From a global perspective, each node in the network $v\in\mathcal{V}$ can be viewed as a state in the Markov Decision Process. The edges in the network indicate the lateral movement of the attacker within the network. Whether the attack attempt is successful depends on the capability of the attacker. For simplicity, in this work, we assume the transition success probability is defined as follows. For every $s\in\mathcal{V}$ and $a^g\in\mathcal{A}^g$,
\vskip -3mm
\begin{equation}
T(s^\prime \mid s = v, a^g = (v,u)) =
\begin{cases}
1 & \text{if $u = v$ and $s^\prime = v$,} \\
c_a & \text{if $u \neq v$ and $s^\prime = u$,}\\
1-c_a & \text{if $u \neq v$, and $s^\prime = v$,}\\
0 & \text{otherwise.}
\end{cases}
\end{equation}

If the attacker chooses to stay at the same node, the self-loop edge will lead to the same state with  probability one. If the attacker chooses to use any outgoing edge and move to another node, the attempt will succeed with probability $c_a \in [0,1]$, which represents the attacker's capability. If the attempt fails, the attacker will stay at the same node.

The goal of penetration testing is to estimate the potential damage an attacker can inflict by compromising the network and affecting system production. A positive reward is given when the attacker enters a node, with the reward value depending on the node's importance to the system. Conversely, staying at the same node indicates that the attacker either failed to move to another node or that the information obtained from the MTG was insufficient for progression. Therefore, staying at the same node results in a negative penalty for the attacker. The movement reward function is given by the following equation:
\begin{equation}
R(s=v,a^g=(v,u),s^\prime) = \begin{cases}
M_a & \text{when } s'  = v,\\
\Bar{V}(v) & \text{when } s' = u \,,\, \forall  u\in\mathcal{V}\setminus \{v\}.
\end{cases}
\end{equation}
where $M_a\in\mathds{R}^-$ is a penalty for the attacker staying at the same node without progressing towards the target. $\Bar{V}: \mathcal{V}\mapsto \mathds{R}^+$ is the reward for entering the state. This value depends on the production importance of the node $v\in\mathcal{V}$ to the target system.

\subsubsection{Global Attack Strategy}\hfill

Unlike traditional MDPs, where the attacker can freely choose actions to optimize expected utility, in the realistic penetration testing settings, the attack strategy at the network level depends on explorations at the local nodes. If the attacker does not find any vulnerabilities leading to the next node, they cannot move forward. Therefore, the global attack strategy in the MSP relies on the outcomes of the MTG.

For each MTG $\Gamma^v$ at node $v\in\mathcal{V}$, the optimal local penetration plans generate the tactic outcome probability as defined in Def.~\ref{def:outpr}. 
Since the outcome space of the MTG is equivalent to the set of the outgoing edges at node $v$, i.e., $ \mathcal{Z}^v= \{u\mid u\in\mathcal{V},(v,u)\in\mathcal{E}\}$, we can view the tactic outcome probability $\tau^v(z)$ as the probability that the attacker will choose action $a^g = (v,z)$ for the MSP. Formally, it leads to the following definition.

\begin{definition}[Global Attack Strategy]
\label{def:global}
Consider the MTG defined in Definition~\ref{def:mtg} and the MSP defined in Definition~\ref{def:msp}. The global attack strategy in MSP is a mapping from the state space to the global action space, i.e., $\pi^g: \mathcal{S}\mapsto \Delta(\mathcal{A}^g)$. For node $v\in\mathcal{V}$, given the MTG $\Gamma^v$ and the local plan profile $\Phi^v = ( \sigma^v_c,\sigma^v_a,\sigma^v_d)$, the global attack strategy is given by
\begin{align}
    \pi^g(a^g\mid s) = \pi^g(a^g = (v,z)\mid s = v) 
    = \tau^v(z\mid \Phi^v),\qquad \forall z\in\mathcal{Z}_v,
    \label{eq:policy}
\end{align}
where $\tau^v(z\mid \Phi^v)$ is the tactic outcome probability as defined in Def.~\ref{def:outpr}.
\end{definition}

The global attack strategy in the MTG outlines the cyber kill chain and the sequence of tactics across the entire system. Rather than focusing on the details at each local node, the MSP connects all the nodes in the network, offering a comprehensive risk assessment for the entire system. This holistic approach allows organizations to better understand the interconnections within their network and the cascading effects of vulnerabilities throughout the system.

\subsection{Meta Penetration Game and Playbook}

Policy evaluation offers a way to estimate the effectiveness of the global attack strategy $\pi^g$ in terms of expected cumulative utilities. Similar to traditional MDPs, policy evaluation of the global attack strategy computes the value functions using the Bellman equations. For all states $s\in\mathcal{S}$, the value function under $\pi^g$ is given by
\begin{equation}
    V^{\pi^g}(s) = \sum_{a^g\in\mathcal{A}^g} \pi^g(a^g\mid s)\sum_{s'\in\mathcal{V}}T(s'\mid s,a^g)\left[R(s,a^g,s')+\gamma V^{\pi^g}(s')\right].
    \label{eq:value}
\end{equation}

The value at each node in the system describes the expected return starting from that node and then acting according to the global attack strategy $\pi^g$. 
For the players at the MTG, the utility of each outcome describes the expected reward of taking that action and moving to the next node in the macro strategy process. Thus, we define the utility functions in the MTG as follows.

\begin{definition}[MTG Utilities]
\label{def:utility}
Given the global attack strategy $\pi^g\in\Pi^g$, the attacker's utility functions of reaching outcome $z\in\mathcal{Z}^v$ in the MTG at node $v\in\mathcal{V}$ are defined as the 
\begin{align}
    u^v_a(z=u)  
    = \sum_{s'\in\mathcal{V}}T(s'\mid s=v,a^g=(v,u))\left[R(s,a^g,s')+\gamma V^{\pi^g}(s')\right],
\end{align}
where $V^{\pi^g}$ is the policy evaluation value function in \eqref{eq:value}. The defender's utility is the opposite of the attacker, i.e., $u^v_d(z) = -u^v_a(z)$  for all $z\in\mathcal{Z}^v$.
\end{definition}

    \begin{figure}[!t]
    \centering
    \includegraphics[width=0.9\linewidth]{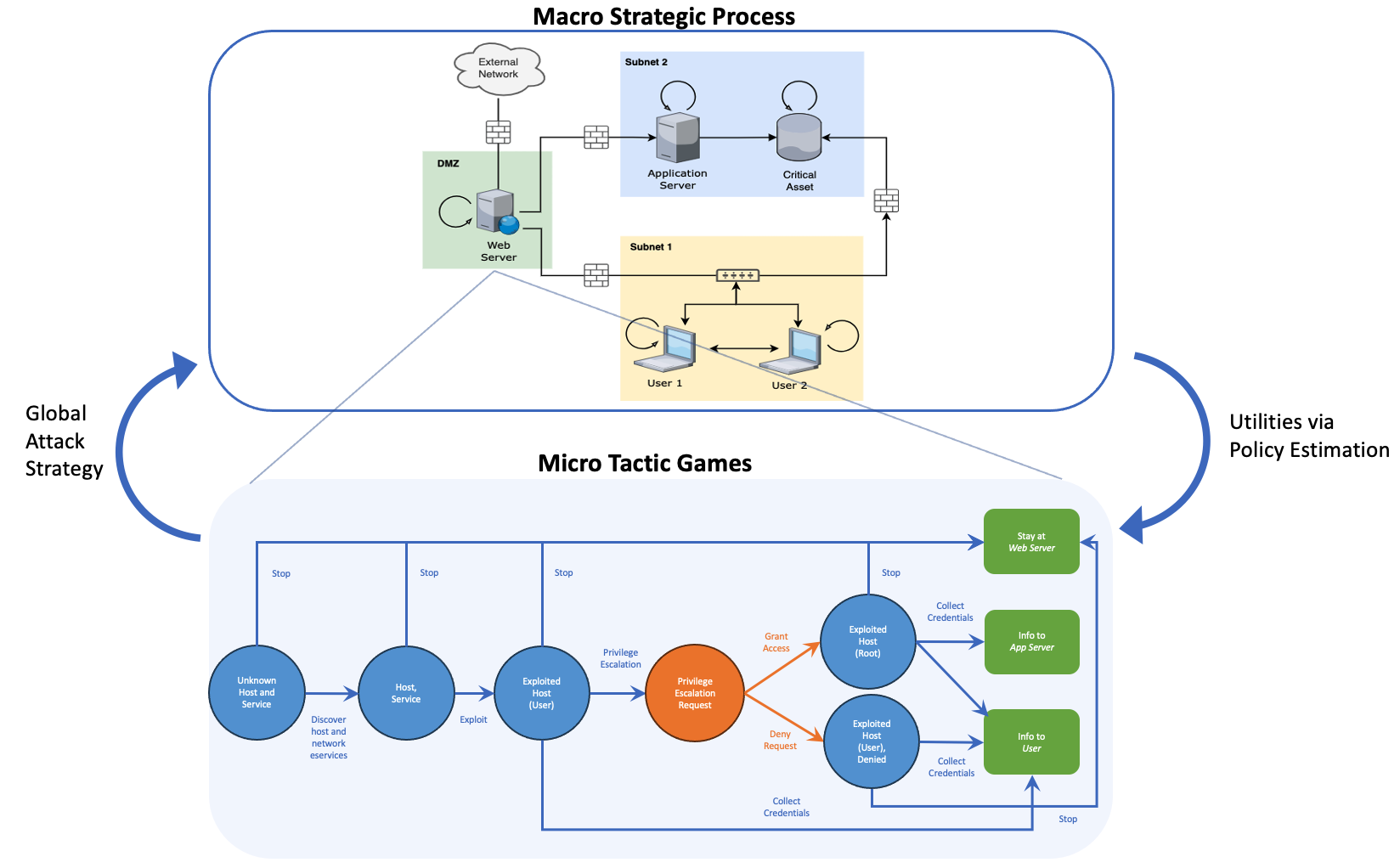}
    \caption{Relationship between Macro Strategic Process and Micro Tactic Games. The local penetration plans in the micro games  affect the global attack strategy, while the policy evaluation at the macro process helps provide the utilities in the micro games.}
    \label{fig:relationship}\vskip -5mm
    \end{figure}

Figure~\ref{fig:relationship} illustrates the relationship between the MSP and the MTGs. The MSP defines the global attack strategy, forming an attack kill chain and providing estimated values for each node through policy evaluation under the current strategy. These estimated values represent the expected outcome utilities at each MTG, guiding the formulation of detailed penetration plans at each node. The sequence of attack and defense techniques at the local node influences the global attack strategy in the macro view, emphasizing how lateral movement is determined by exploration and exploitation. This iterative process continues until a meta-solution is reached. Together, the MSP and the MTGs constitute a meta-security game that captures decision-making in penetration testing at both network and node levels.

\begin{definition}[Meta-Security Game]\label{def:metasec}
Given the network system graph $\mathcal{G} = \langle \mathcal{V}, \mathcal{E} \rangle$, the meta-security game is composed of two parts: $\Xi = \langle \{\Gamma^v\}_{v\in\mathcal{V}}, \Lambda^g \rangle$, where $\{\Gamma^v\}_{v\in\mathcal{V}}$ is the set of MTGs as defined in Definition~\ref{def:mtg} and $\Lambda^g$ is the macro strategy process as defined in Definition~\ref{def:msp}. 
\end{definition}

The MSP and the MTGs are inherently coupled, as the local penetration plans in the MTGS naturally affect the global attack strategy, while the policy evaluation at the MSP helps provide the utilities in the MTGs. Hence, a holistic solution concept is necessary for the proposed meta-security game.

\begin{definition}[Meta Penetration Playbook]\label{def:playbook}
    Consider the meta-security game $\Xi = \langle \{\Gamma^v\}_{v\in\mathcal{V}}, \Lambda^g \rangle$ defined in Definition~\ref{def:metasec}, the meta penetration playbook $\xi=\langle  \{\Phi^v\}_{v\in\mathcal{V}}, \pi^g \rangle$ is composed of two elements:
    \begin{enumerate}[nosep, leftmargin=*]
        \item[\textbullet] \textbf{Local Penetration profile:} $\Phi^v = (\sigma^{v,*}_a,\sigma^{v,*}_d,\sigma^v_c)$ constitutes the local penetration plans of all players for the MTG at node $\Gamma^v$ for each $v\in\mathcal{V}$,
        \item[\textbullet] \textbf{Global Attack Strategy:} $\pi^g$ is the global attack strategy in the macro strategy process,
    \end{enumerate}
    which satisfy two conditions:
    \begin{enumerate}[nosep, leftmargin=*]
        \item[\textbullet]  \textbf{Policy Dependency:} 
        The global attack strategy $\pi^a$ at the macro strategy process depends on the local penetration plans $\{\Phi^v\}_{v\in\mathcal{V}}$ as defined in 
        Definition~\ref{def:global},
        \item[\textbullet] \textbf{Value Dependency:}
        For each MTG at node $v\in\mathcal{V}$, the utility of each tactic's expected outcome depends on the policy evaluation results of global attack strategy $\pi^g$ according to Definition~\ref{def:utility}.
    \end{enumerate}
\end{definition}

In a global view, a complete cyber attack kill chain comprises a sequence of tactics. The global attack strategy guides how to compose this attack kill chain within the target system. Within each tactic, there is a sequence of techniques. The local penetration profile at each node describes the decision-making process of the players to complete these technique sequences. The policy and value dependencies connect the macro and micro solutions, helping us to form an efficient and consistent meta-penetration playbook for the meta-security game.

\section{Computation}

To determine the optimal meta-penetration playbook, we propose the following algorithm to find the exact solution. In this section, we use the purple teaming defense as the penetration scheme and solution concept to describe the computational process. For other security schemes, the general structure of the algorithm remains the same, but the method for obtaining the local penetration profile in each MTG differs (line $7$ in Algorithm~\ref{alg:compute}).

{\renewcommand{\arraystretch}{1.5}
\setlength{\textfloatsep}{2pt}
\begin{algorithm}[!t]
\caption{Purple Teaming Meta Penetration Playbook Algorithm}\label{alg:compute}
\KwIn{Meta-security game $\Xi=\langle \{\Gamma^v\}_{v\in\mathcal{V}},\Lambda^g\rangle$}
\vskip 1mm
Set the utilities $u^v_i$ in each MTG to arbitrary value\;
\Repeat{Meta penetration playbook converges}{
\textbf{Micro Penetration Profile Computation: }\\ 
\hspace{2em} For every MTG at $v\in\mathcal{V}$, compute the purple teaming penetration plan profile $\Phi^v =(\sigma^{v,*}_a,\sigma^{v,pur}_d,\sigma^v_c)$\;
\hspace{2em} Compute the attack strategy $\pi^a$ under $\{\Phi^v\}_{v\in\mathcal{V}}$ using \eqref{eq:policy} \;
\textbf{Macro Attack Strategy Evaluation:}\\ 
\hspace{2em} Compute the value function $V^{\pi^g}$ of $\Lambda^g$ using \eqref{eq:value}\;
\hspace{2em} Update the utilities $u^v_i$ in each MTG $\Gamma^v$\;}
\vskip 1mm
\KwResult{Meta penetration playbook $\xi=\langle  \{\Phi^v\}_{v\in\mathcal{V}}, \pi^g \rangle$.} 
\end{algorithm}} 

To analyze the risks of each node and evaluate the effectiveness of the system defense, we define the network risk score as a measurement metric. For each node $v\in\mathcal{V}$ in the system, we are interested in whether the attacker has access to this node, and what is the expected damage he can create. Let $V_{max}\in\mathbb{R}^+$ be the maximum damage that the attacker could cause. Given the meta-security game $\Xi$ and the corresponding meta penetration playbook $\xi$, the network risk score of node $v\in\mathcal{V}$ is a normalized risk value $Risk(v\mid \xi)\in[0,1]$ given by $V^{\pi^g}(v)/V_{max}$ if $V^{\pi^g}(v)$ is non-negative; otherwise the score is set to $0$.


\section{Case Study}

We use the network topology depicted in Figure~\ref{fig:example} as a case study to demonstrate the effectiveness of MEGA-PT. The system consists of $5$ nodes, including the web server, two user devices, the application server, and the critical asset. The MTG trees for each node are illustrated in Appendix~\ref{sec:app}. These game trees align with attack scenarios from the MITRE ATT\&CK model and can be adjusted to fit specific system structures. 
We evaluate the performance of our model through numerical experiments conducted in a self-built Python simulator. While the model's applicability extends to practical systems given the network topology and vulnerability trees, the details on how to gather this information are beyond the scope of this paper.

The penetration testing agent acts as an attacker entering the system from the external network, starting at the web server. The goal is to penetrate the system and potentially affect operations at the critical asset. We assume that there is an artificial node in the network representing a successful compromise of operations. Once the attacker reaches this node, the penetration process is considered terminated.
In our experiments, we set the parameters as follows: the immediate rewards for entering each node and the penalty are specified in Table~\ref{table:values}. The attacker's capability is denoted as \( c_a = 0.8 \) by default, and we use \( \gamma = 0.9 \) for the policy evaluation process.

\begin{table}[!t]
    \caption{Movement rewards for the attacker in the network.}
    \centering
    \begin{tabular}{|c|c|c|c|c|c|}
        \hline
        Web Server   & User Devices& App Server & Critical Asset & Operation Down & Penalty\\
        \hline
        0 & 5 & 20 & 30 & 100 & -15\\
        \hline
    \end{tabular}
    \vskip -5mm
    \label{table:values}
\end{table}

\subsection{Optimal Penetration Plan and Purple Teaming}

\begin{figure}[t!]
\centering     
\subfloat[Fixed defense, Weak Attacker $c_a = 0.2$.]{\includegraphics[width=0.5\textwidth]{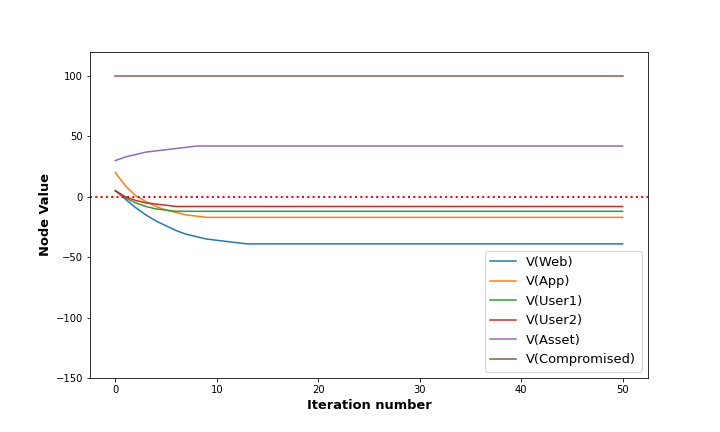}}
\subfloat[Purple Teaming. Weak Attacker $c_a = 0.2$.]{\includegraphics[width=0.5\textwidth]{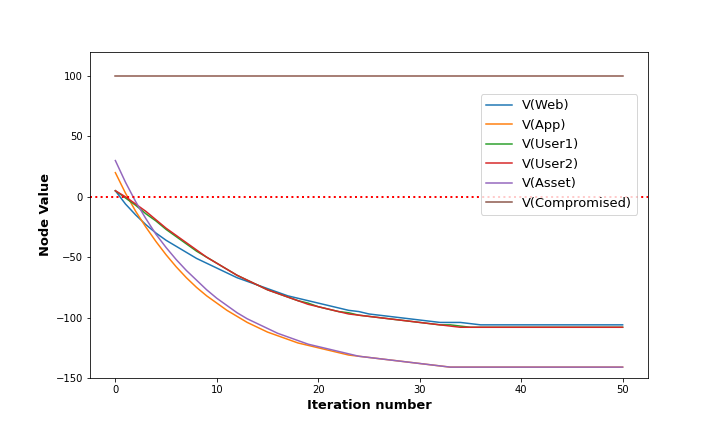}}\\
\subfloat[Fixed defense, Median Attacker $c_a = 0.5$.]{\includegraphics[width=0.5\textwidth]{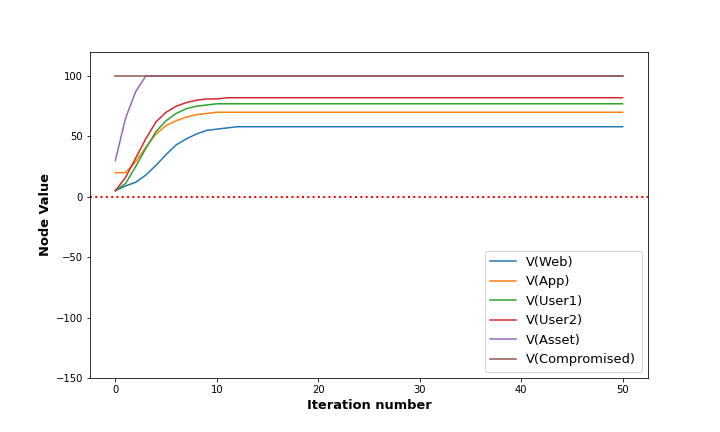}}
\subfloat[Purple Teaming. Median Attacker $c_a = 0.5$.]{\includegraphics[width=0.5\textwidth]{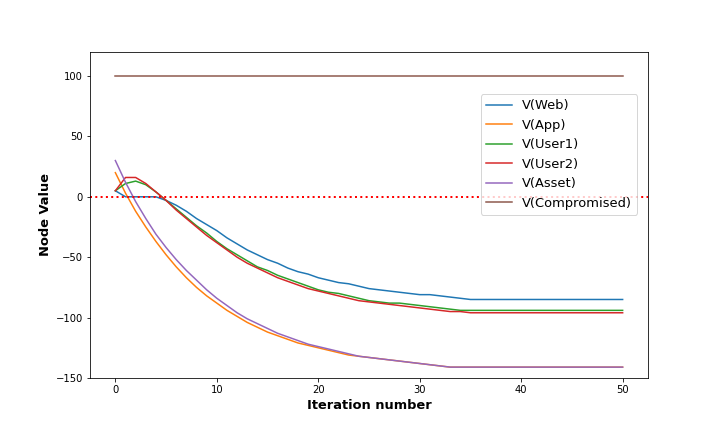}}\\
\subfloat[Fixed defense, Strong Attacker $c_a = 0.8$.]{\includegraphics[width=0.5\textwidth]{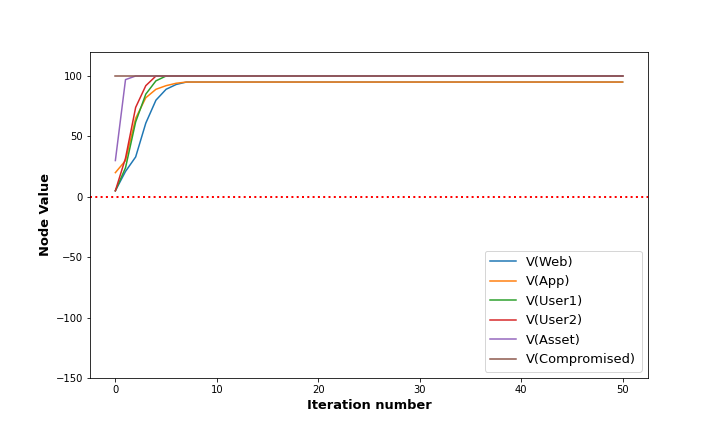}}
\subfloat[Purple Teaming. Strong Attacker $c_a = 0.8$.]{\includegraphics[width=0.5\textwidth]{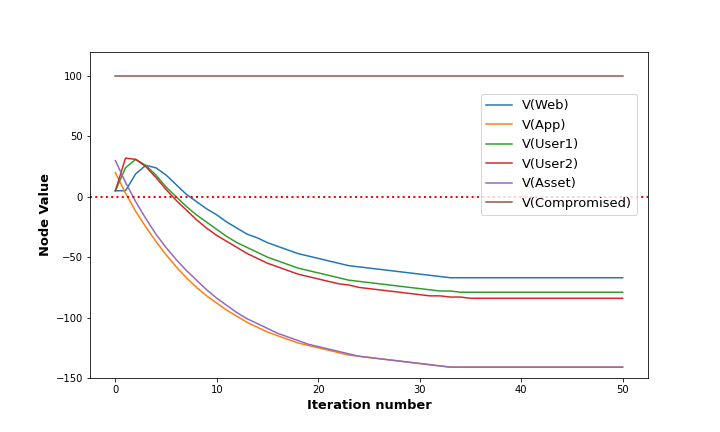}}
\caption{Node values under different conditions. The $x$-axis is the number of iterations and $y$-axis is the value of the node for the attacker. We consider both fixed defense and purple teaming defense against three types of attackers ($c_a=\{0.2,0.5,0.8\}$). }
\vskip -10mm
\label{fig:opt}
\end{figure}

Figure~\ref{fig:opt} illustrates the value of each node in the network during the meta penetration playbook computation. We consider three types of attackers with different capabilities: a weak attacker with \( c_a = 0.2 \), a median attacker with \( c_a = 0.5 \), and a strong attacker with \( c_a = 0.8 \). In the left column, we test the model under a fixed defense strategy. Specifically, at the web server, the probability of the defender granting access is \( 0.7 \). At the application server, the probability of the defender enforcing strict authorization policies is \( 0.3 \). Finally, at the critical asset, the probability of the defender executing the command is \( 0.6 \).

The value of each node represents the expected accumulated reward if the attacker starts penetration from that node. As observed in the figures, when the attacker is weak, even with a fixed defense strategy, no node in the system yields a positive reward. However, as the attacker's capabilities increase, certain states in the system can provide positive rewards. The stronger the attacker, the higher the maximum reward achievable. 
The potential system damage can be mitigated by adopting a purple teaming defense strategy. In Figure~\ref{fig:opt}, the right column demonstrates that when implementing purple teaming and adjusting the defense plan, scenarios where the attacker previously gained positive rewards turn into negative rewards as the strategy converges. The attacker can only achieve gains when the defense strategy is not yet converged. These results illustrate that our model effectively accommodates varying attacker capabilities, and purple teaming offers enhanced defense capabilities.

    \begin{figure}[!t]
    \centering
    \includegraphics[width=0.7\linewidth]{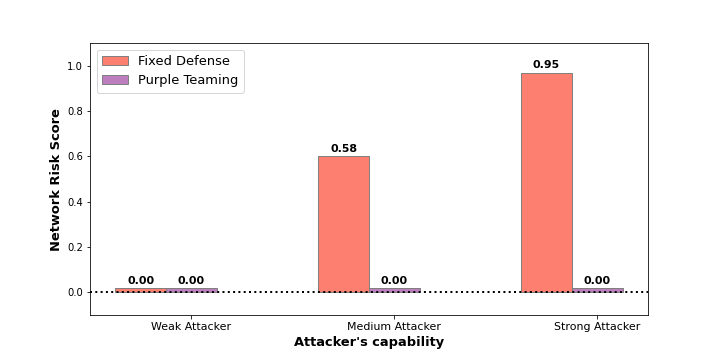}
    \caption{Network risk score of the web server under different security schemes.} \vskip -5mm
    \label{fig:score}
    \end{figure}

Figure~\ref{fig:score} shows the network risk score of the web server under different security schemes. We consider \( V_{max} = 100 \) as the maximum damage the attacker could cause. With a fixed defense, the web server's network risk score increases as the attacker's capability increases. The web server is not risky only when the attacker is weak, and his capability is low (\( c_a = 0.2 \)). With purple teaming defense, the web server is safe against all types of attackers, and the network risk score always remains zero. This indicates that purple teaming defense helps the system find a defense strategy that can reduce the network risk and prevent the system from being compromised.

\subsection{Vulnerability Adaptability}\hfill

We consider a scenario where the MTG changes within the network. In this experiment, we assume a change at the application server where no longer any information can lead to the critical asset. This change might occur if the system encrypts this information or restores the database, preventing the attacker from decrypting or accessing the data related to the critical asset. Consequently, the MTG at the application server changes, with the only outcome being staying at the same node after exploration.

We compute the meta penetration playbook under a fixed defense strategy before and after the local node change:
\begin{align*}
    \text{Before: } \xi &= \langle \Phi^{web},  \textcolor{blue}{\Phi^{app}}, \Phi^{user1}, \Phi^{user2}, \Phi^{asset}, \textcolor{blue}{\pi^g} \rangle;\\
    \text{After: } \xi^\prime &= \langle \Phi^{web},  \textcolor{red}{\Phi^{app,\prime}}, \Phi^{user1}, \Phi^{user2}, \Phi^{asset}, \textcolor{red}{\pi^{g,\prime}}\rangle
\end{align*}
It can be noticed that in the meta penetration playbook, the vulnerability change only affects the application server and the global attack strategy, while the other plans remain the same. This indicates that we only need to recompute the MTG at that node while retaining the original structure of the other nodes and updating the global attack policy. 

The comparison of global attack strategies before and after this local node change is illustrated in Table~\ref{tab:before} and Table~\ref{tab:after}. Prior to the vulnerability change, the attacker from the web server had a high probability ($\Pr=0.7$) of transferring to the application server to further attack the system. However, after the change, since there is no longer a connection between the application server and the critical asset, our computed penetration plan adjusts its global attack strategy and no node would transfer to the application server anymore. The attacker would focus solely on the user devices to find any information that could compromise the operation. This result demonstrates that MEGA-PT can effectively adapt to local vulnerability changes.

\begin{table}[!t]
    \centering
    \begin{minipage}[t]{0.45\textwidth}
        \centering
        \caption{Global attack strategy $\pi^g$ before vulnerability change.}
        \[
        \begin{array}{c|cccccc}
        \text{} & \text{web} & \text{app} & \text{user} & \text{user} & \text{asset} & \text{final} \\
        \hline
        \text{web} & 0 & 0.7 & 0.3 & 0 & 0 & 0 \\
        \text{app} & 0 & 0.37 & 0 & 0 & 0.63 & 0 \\
        \text{user} & 0 & 0 & 0 & 0.5 & 0.5 & 0 \\
        \text{user} & 0 & 0 & 0.3 & 0 & 0.7 & 0 \\
        \text{asset} & 0 & 0 & 0 & 0 & 0.4 & 0.6 \\
        \text{final} & 0 & 0 & 0 & 0 & 0 & 1 \\
        \end{array}
        \]
        \label{tab:before}
    \end{minipage}%
    \hfill
    \begin{minipage}[t]{0.45\textwidth}
        \centering
        \caption{Global attack strategy $\pi^g$ after vulnerability change.}
        \[
        \begin{array}{c|cccccc}
        \text{} & \text{web} & \text{app} & \text{user1} & \text{user2} & \text{asset} & \text{final} \\
        \hline
        \text{web} & 0 & 0 & 1 & 0 & 0 & 0 \\
        \text{app} & 0 & 1 & 0 & 0 & 0 & 0 \\
        \text{user1} & 0 & 0 & 0.5 & 0.5 & 0 & 0 \\
        \text{user2} & 0 & 0 & 0.3 & 0 & 0.7 & 0 \\
        \text{asset} & 0 & 0 & 0 & 0 & 0.4 & 0.6 \\
        \text{final} & 0 & 0 & 0 & 0 & 0 & 1 \\
        \end{array}
        \]

        \label{tab:after}
    \end{minipage}\vskip -8mm
\end{table}

\subsection{Network-level Scalability}\hfill

In this scenario, we demonstrate the scalability of MEGA-PT by increasing the number of user devices within the subnet. We assume that all users share the same micro tactic tree, and the outcome leads to other users randomly transferring to another user device in the network with equal probability. Since MEGA-PT is modular, it allows us to compute each micro tactic tree in parallel. This means that if users share the same micro game, we can compute one instance and apply the result to all nodes in the network without recomputation. In contrast, traditional reinforcement learning-based methods treat each node's status in the system as a separate state. As the number of users increases, the state space grows exponentially, resulting in significantly increased computational time. 

From a meta penetration playbook point of view, the network-level scale change results in the following change in the playbook:
\begin{align*}
    \text{Before: } \xi &= \langle \Phi^{web},  \Phi^{app}, \textcolor{blue}{\Phi^{user}}, \Phi^{asset}, \textcolor{blue}{\pi^g} \rangle;\\
    \text{After: } \xi^\prime &= \langle \Phi^{web},  \Phi^{app}, \textcolor{blue}{\Phi^{user}, \Phi^{user}, \Phi^{user},\dots}, \Phi^{asset}, \textcolor{red}{\pi^{g,\prime}}\rangle
\end{align*}
Since the user devices are of the same type, each local penetration profile for the user device is the same. The only updated element in the playbook is the global attack strategy as it would consider more nodes in the system.

Figure~\ref{fig:scale} compares the computational time for finding the optimal strategy between our model and an RL-based model under different numbers of users. In the RL-based model, the state is the aggregation of all related information at each node, such as whether the web server has been discovered or whether the user credential has been found. The transition and reward in the RL-based model follow the same setting, but the state and transition space are enormous. We use Q-learning as the learning method under fixed defense and compare the computational time to find the optimal penetration strategy in the system. It is evident that as the number of users increases, the computational time for the RL-based method increases drastically, whereas our method shows minimal change. These results demonstrate that MEGA-PT scales effectively with large networks containing similar devices, providing robust scalability.
    \begin{figure}[!t]
    \centering
    \includegraphics[width=0.7\linewidth]{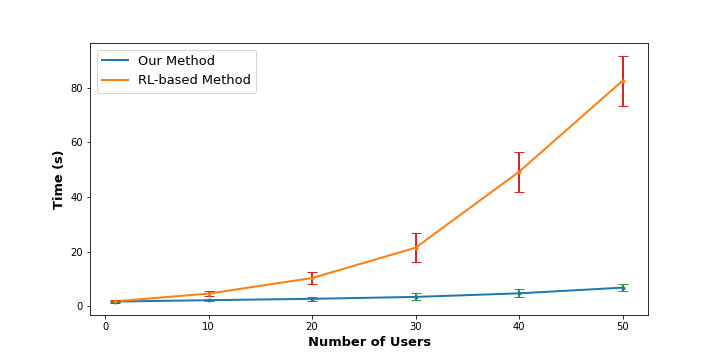}
    \caption{Scalability comparison between our model and RL-based model.}
    \label{fig:scale}\vskip -5mm
    \end{figure}

\section{Conclusion}
In this work, we propose MEGA-PT, a meta-game agile penetration testing model for automated and effective penetration testing. This model features MTGs for local node interactions and a macro strategy process for network-wide attack chains. It adheres to the TTPs in real cyber security frameworks, allows distributed and modularized penetration testing, and adapts to changes at both local and network levels. Experiments show that using MEGA-PT's purple teaming, the system can find effective defense strategies to reduce the network risk score of each node. Compared to other RL-based automated penetration testing models, MEGA-PT's distributed features enable agile adaptation to both local-level vulnerability changes and network-level topology changes, allowing effective and scalable penetration testing in large network systems.

For future work, we plan to discuss global defense strategies at the macro level and explore partial information in the game. MEGA-PT is promising for extension to different security schemes and can serve as a foundational framework for the next generation of automated penetration testing.

\bibliographystyle{splncs04}
\bibliography{ref.bib}

\appendix
\section{Appendix: Micro Tactic Game Trees}
\label{sec:app}


    \begin{figure}[!h]
    \centering
    \includegraphics[width=0.9\linewidth]{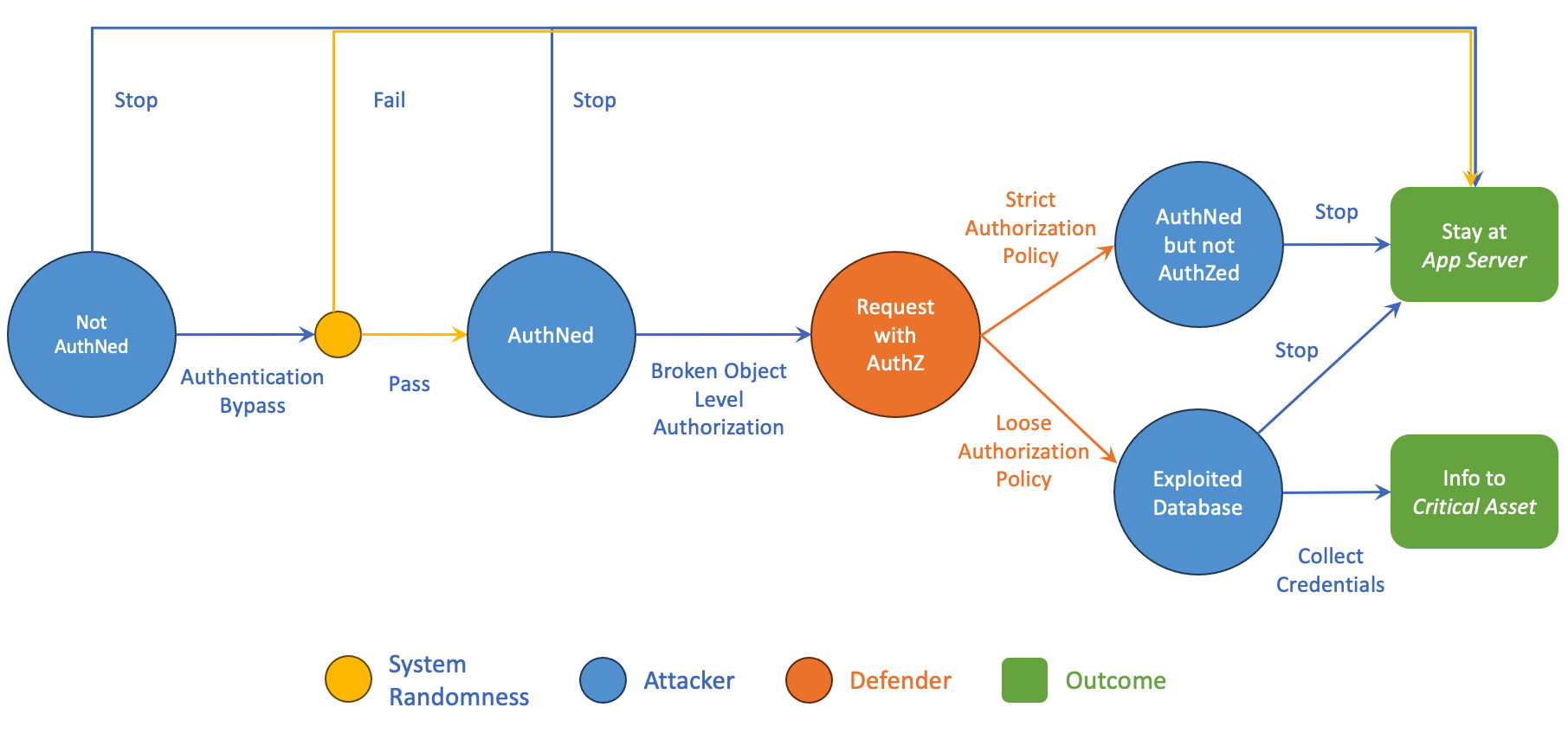}
    \caption{MTG tree at the application server.}\vskip -8mm
    \label{fig:apptree}
    \end{figure}

    
    \begin{figure}[!h]
    \centering
    \includegraphics[width=0.9\linewidth]{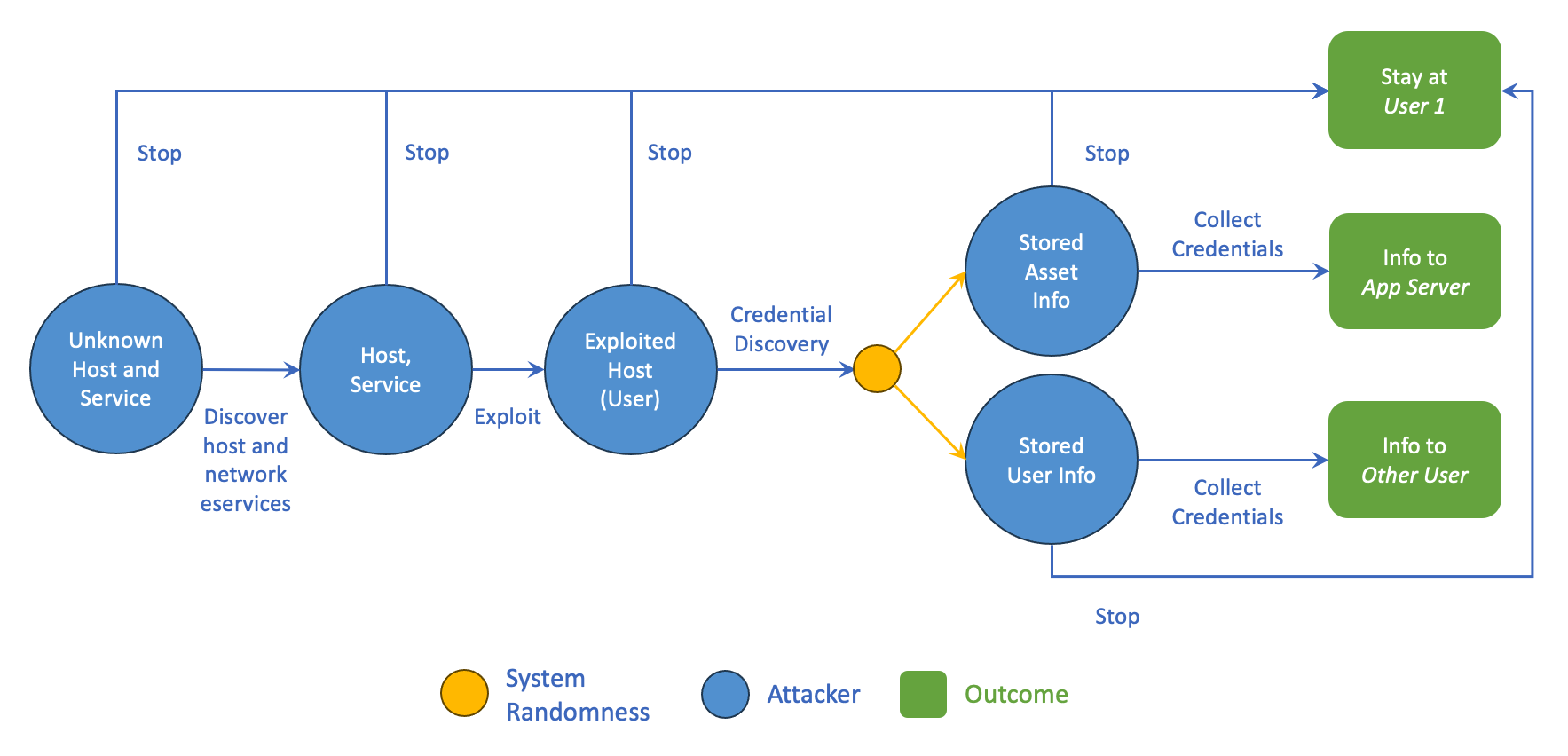}
    \caption{MTG tree at the user device.}\vskip -8mm
    \label{fig:usertree}
    \end{figure}

    
    \begin{figure}[!h]
    \centering
    \includegraphics[width=0.5\linewidth]{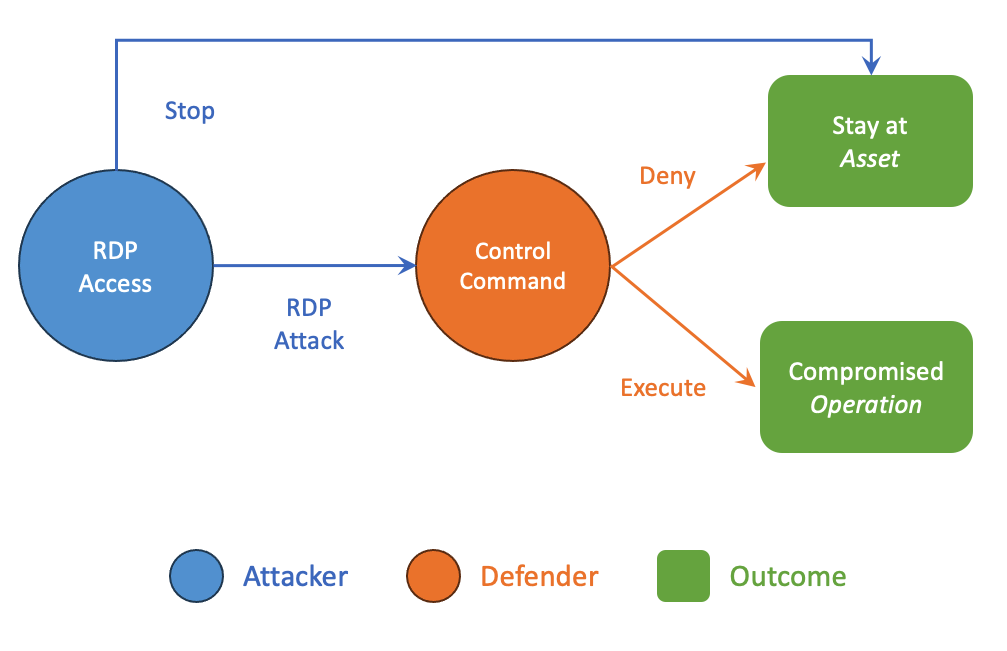}
    \caption{MTG tree at the critical asset.}
    \label{fig:assettree}
    \end{figure}


\end{document}